\newtheorem{lemma}{Lemma}
\newcounter{note}[section]
\newcommand{\defeq}{\overset{\text{def}}{=}}
\newcommand{\E}{\mathop{{}\mathbb{E}}}
\newcommand{\st}{\text{\xspace s.t. \xspace}}
\newcommand{\mcD}{\mathcal{D}}
\newcommand{\mscs}{\textsc{WMSCSS}\xspace}
\newcommand{\pmscs}{\ensuremath{P_\text{MSCS}}}
\title{An Optimal Rounding for Half-Integral Weighted Minimum Strongly Connected Spanning Subgraph}
\author[1]{D Ellis Hershkowitz\thanks{dhershko@cs.cmu.edu}}
\author[1]{Gregory Kehne\thanks{gkehne@andrew.cmu.edu}}
\author[1]{R.\ Ravi\thanks{ravi@andrew.cmu.edu}}
\affil[1]{Carnegie Mellon University}
\begin{document}
	
	\maketitle
	\begin{abstract}
		In the weighted minimum strongly connected spanning subgraph (\mscs) problem we must purchase a minimum-cost strongly connected spanning subgraph of a digraph. We show that half-integral linear program (LP) solutions for \mscs can be efficiently rounded to integral solutions at a multiplicative $1.5$ cost. This rounding matches a known $1.5$ integrality gap lower bound for a half-integral instance. More generally, we show that LP solutions whose non-zero entries are at least a value $f > 0$ can be rounded at a multiplicative cost of $2 - f$.
	\end{abstract}

		\section{Introduction}
	The weighted minimum strongly connected spanning subgraph (\mscs) problem is arguably the simplest NP-hard connectivity problem on directed graphs. In \mscs we are given a strongly connected\footnote{A strongly connected digraph is one in which every node has a directed path to every other node.} digraph $D = (V, A)$ with weight function $w : A \to \mathbb{R}^+$. Our goal is to purchase a strongly connected spanning subgraph $H = (V, A')$ of $G$ of minimum cost, where the cost of $H$ is $w(A') := \sum_{a \in A'} w(a)$. The simplicity of \mscs has lent itself to several applications in network design and computational biology \cite{moyles1967algorithm,khuller1995approximating,vincent2005transitive,aditya2013algorithmic}.
	
	Unfortunately \mscs is NP-hard \cite{johnson1979computers}. Even worse, it has been shown to be MaxSNP hard, meaning that it admits no polynomial-time approximation scheme assuming $\text{P} \neq \text{NP}$ \cite{khuller1995approximating}.
	
	Fortunately \mscs admits a simple 2-approximation due to \citet{frederickson1981approximation} which employs min-cost arborescences. Given digraph $D=(V,A)$ and root $r\in V$, an $r$-in-arborescence of $D$ is a spanning subgraph $I = (V,A')$ such that every node $v \neq r$ has exactly one path to $r$ along edges in $A'$. An $r$-out-arboresecence $O$ is defined analogously with paths from $r$ to $v$. Minimum-weight $r$-in- and $r$-out-arborescences can be computed in polynomial time \cite{chu1965shortest,edmonds1967optimum}. The mentioned $2$-approximation simply fixes an arbitrary root $r$ and then takes the union of a min-cost $r$-in-arborescence and a min-cost $r$-out-arborescence. Since every node has a path to and from $r$ in their union, the result is a feasible \mscs solution. Moreover, the result is a $2$-approximation since the optimal \mscs is a strongly connected spanning subgraph and so contains a feasible $r$-in- and $r$-out-arborescence as a subgraph for any choice of $r$. Remarkably, this $2$-approximation has remained the best-known polynomial-time approximation for \mscs for almost 40 years.
	
	
	Thus, \mscs falls into a class of combinatorial optimization problems which admit simple, polynomial-time algorithms whose constant approximation ratios have not been improved in many decades. Notable other examples include the Traveling Salesman Problem (TSP) for which Christofides' simple $1.5$-approximation \cite{christofides1976worst} has remained the best polynomial-time approximation since 1976 and the Weighted Tree Augmentation Problem (WTAP) for which the best known polynomial-time approximation ratio is $2$ as established by \citet{frederickson1981approximation} in 1981 in the same work that gave a $2$ approximation for \mscs. Given the apparent difficulty in improving these bounds, a great deal of work has focused on improving the approximation ratios of algorithms for special cases of these problems \cite{gharan2011randomized,momke2011approximating, kortsarz2015simplified, khuller1995approximating,vetta2001approximating,berman2009approximating,grimmer2018dual,zhao2003linear, fiorini20173}.

	A recently fruitful such special case has been the assumption that solutions to the relevant linear program (LP) are half-integral---that is, each coordinate of an optimal solution is assumed to lie in $\{0, \frac{1}{2}, 1\}$. Notably, \citet{cheriyan19992} showed that WTAP admits a $4/3$-approximation if the relevant LP is half-integral and \citet{iglesias2017coloring} generalized this by showing that a $2/(1+f)$-approximation is possible for WTAP if non-zero LP values are assumed to be at least $f > 0$. Similarly, a recent breakthrough of \citet{karlin2019improved} showed that a $\approx 1.49993$ approximation is possible for TSP if the LP solution is assumed to be half-integral. Studying such special cases offers the opportunity to develop tools and to help delineate lower and upper bounds for the general case. For example, TSP instances with half-integral optimal LP solutions are conjectured to be the hardest TSP instances to approximate \cite{schalekamp20142} and so the work of \citet{karlin2019improved} was taken as evidence that Christofides' algorithm does not, in fact, attain the best constant approximation among all polynomial-time algorithms---a suspicion recently confirmed by \citet{karlin2020slightly}. 


\subsection{Our Contributions}

In this work we take this approach to \mscs. Specifically, we adapt a deterministic algorithm of \citet{laekhanukit2012rounding} to show that half-integral solutions for the \mscs LP can be deterministically rounded in polynomial time at a multiplicative cost of $1.5$. In this LP, we enforce that every non-trivial cut has at least one purchased edge leaving. We use $\delta^+(S)$ to denote the set of arcs leaving $S \subset V$.
\begin{align*}\tag{\mscs LP}\label{WMSCSLP}
\min \: w(x) \defeq w^T \cdot x \\
\st \qquad x(\delta^+(S)) &\geq 1 & \forall \emptyset \subset S \subset V\\
x_a &\geq 0 &\forall a \in A
\end{align*}

\citet{laekhanukit2012rounding} gave a family of half-integral instances of \mscs for which the integrality gap of the \ref{WMSCSLP} is bounded below by $1.5-\epsilon$ for any $\epsilon > 0$, and so our upper bound of $1.5$ is tight.

More generally, we show how to round any LP solution with non-zero entries bounded below by $f>0$ at a multiplicative cost of $2-f$. 

Our result for half-integral solutions may be seen as adding to a growing body of evidence that a polynomial-time $1.5$-approximation is both achievable and the best possible for \mscs. Prior evidence includes a series of works that culminated in a $1.5$-approximation for the unit-cost case of \mscs \cite{khuller1995approximating,vetta2001approximating,berman2009approximating} and the aforementioned $1.5$ integrality gap lower bound, which is the best known integrality gap lower bound for \ref{WMSCSLP}. Since the best known integrality gap lower bound is attained by a half-integral solution, it seems possible that for \mscs, like for TSP, the hardest-to-approximate instances may be half-integral.

Our rounding algorithm will be a degenerate form of one proposed by \citet{laekhanukit2012rounding}. In particular, \citet{laekhanukit2012rounding} proposed an algorithm to study the $k$-arc connected subgraph problem for $k \geq 2$ on \emph{unit-cost} graphs. This is the $k$-arc connected generalization of the unit-cost \mscs. We make use of the same algorithm but use it with $k=1$ \emph{on arbitrary cost graphs}; thus, the setting in which we apply this algorithm is somewhat different from that of prior work.

	\section{Rounding}
We review the algorithm of \citet{laekhanukit2012rounding} for $k=1$ before presenting our new analysis.
\subsection{Algorithm of \citet{laekhanukit2012rounding}}
The algorithm makes use of the $r$-in- and $r$-out-arborescence LPs, defined as follows.
\begin{align*}\tag{In-Arborescence LP}\label{LPArbIn}
\min \: w(x) \\
\st \quad x(\delta^+(S)) &\geq 1 & \forall \emptyset \subset S \subset V \st r \not \in S\\
x_a &\geq 0 &\forall a \in A
\end{align*}
The $r$-out-arborscence LP is defined symmetrically; in particular ``$\delta^+(S)$'' is replaced with ``$\delta^-(S) $'' (the set of edges entering the set $S$).

Given an $x$ which is feasible for the arborescence LP, it is known how to efficiently sample arborescences in a manner consistent the marginals defined by $x$, as summarized by the following claim. In the following, $\mcD_\text{in}$ is a distribution over subgraphs of $D$ each of which contains an in-arborescence and $\mcD_\text{out}$ is a distribution over subgraphs of $D$ each of which contains an out-arborescence.\footnote{In the algorithm originally presented by $\mcD_\text{in}$ \citet{laekhanukit2012rounding} the lemma has $I$ \emph{equal} to an in-arborescence and $O$ \emph{equal} to an out-arborescence and $\Pr(a \in I) \leq x_a$ and $\Pr(a \in O) \leq x_a$. It is easy to see that the claimed lemma immediately follows by adding extra arcs to $I$ and $O$ to increase the probability arc $a$ is sampled to be exactly $x_a$ whenever that is not the case.}

\begin{lemma}[\cite{frank1979covering,carr2002randomized,edmonds1973edge,gabow1995matroid}]\label{thm:sampling}
	Given an $x$ feasible for the $r$-in- and $r$-out-arborescence LP's, we can, in polynomial time, independently sample sub-graphs $I\sim \mcD_\text{in}$ and $O \sim \mcD_{\text{out}}$ such that $\Pr(a \in I) = \Pr(a \in O) = x_a$ for every $a \in A$. Moreover, the size of the support of $\mcD_\text{in}$ and $\mcD_\text{out}$ is polynomial in the size of the graph and is computable in polynomial-time.
\end{lemma}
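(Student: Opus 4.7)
The plan is to prove the in-arborescence case; the out-arborescence case is symmetric. I would combine two classical ingredients: Edmonds' polyhedral characterization of the arborescence polytope, and the Carr--Vempala framework for turning polyhedral decompositions into efficiently samplable distributions.

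The core step is Edmonds' theorem: the convex hull of characteristic vectors of $r$-in-arborescences of $D$ equals
\[
P_\text{in} = \left\{ y \geq 0 : y(\delta^+(S)) \geq 1 \ \forall \emptyset \subsetneq S \subseteq V \setminus \{r\}, \; y(\delta^+(v)) = 1 \ \forall v \neq r \right\},
\]
and the dominant of $P_\text{in}$ (obtained by dropping the degree equalities) coincides with the feasible region of the $r$-in-arborescence LP. Any $x$ feasible for the LP can therefore be written as $x = y + z$ with $y \in P_\text{in}$ and $z \geq 0$, and by Carath\'eodory's theorem $y$ admits a convex combination $y = \sum_i \lambda_i \chi_{I_i}$ with at most $|A|+1$ terms, each $I_i$ an $r$-in-arborescence. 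To make this constructive in polynomial time I would use the Carr--Vempala iterative procedure: use Edmonds' polynomial-time minimum-cost arborescence algorithm as an extreme-point oracle for $P_\text{in}$, peel off the maximum scalar multiple of the returned extreme point from the residual, and recurse. Matroid intersection (Gabow) and Frank's covering results give alternative combinatorial routes to the same decomposition.

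Finally, to promote the inequality $\Pr(a \in I) = y_a \leq x_a$ to equality (so that sampled subgraphs merely \emph{contain} an in-arborescence but realize the exact marginals prescribed by $x$), I would augment each sampled $I_i$ by independently adding every arc $a \notin I_i$ with probability $(x_a - y_a)/(1 - y_a)$, so that the total inclusion probability becomes $y_a + (1-y_a)(x_a - y_a)/(1-y_a) = x_a$; capping $x_a$ at $1$ if necessary (which preserves LP feasibility by monotonicity of the cut constraints) ensures this probability is well-defined. Sampling $O \sim \mcD_\text{out}$ proceeds symmetrically and independently. The step I expect to be the main obstacle is the polynomial-time construction of the Carath\'eodory decomposition; this is precisely where pairing Edmonds' polynomial separation/optimization oracle for $P_\text{in}$ with the Carr--Vempala procedure is essential, and where the lemma leans most heavily on its cited antecedents.
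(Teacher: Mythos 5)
Your overall route is the right one and matches the sources the paper cites for this lemma (the paper itself offers no proof beyond a footnote): Edmonds' description of the dominant of the $r$-in-arborescence polytope by the cut inequalities, a constructive Carath\'eodory decomposition $y=\sum_i\lambda_i\chi_{I_i}$ obtained by pairing the Carr--Vempala procedure with Edmonds' arborescence algorithm as the optimization oracle (or via Frank/Gabow), and then an augmentation step to lift $\Pr(a\in I)=y_a$ up to $x_a$. You are right that the constructive decomposition is where the cited machinery does the real work, and your account of that step is acceptable.

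The genuine gap is in your augmentation. Adding each arc $a\notin I_i$ \emph{independently} with probability $(x_a-y_a)/(1-y_a)$ does yield the marginals $\Pr(a\in I)=x_a$, but it destroys the second assertion of the lemma: the support of the resulting distribution is generically exponential (any subset of the arcs with $0<x_a-y_a<1-y_a$ can be appended to a given $I_i$), whereas the lemma requires a polynomial-size, polynomial-time-computable support. This is not a cosmetic requirement---the derandomized algorithm literally enumerates the supports of $\mcD_\text{in}$ and $\mcD_\text{out}$ and returns the cheapest union $I_i\cup O_j$, so an exponential support would break the running time of \Cref{thm:algRand}. The fix is to augment in a correlated way: for each arc $a$, distribute the excess mass $x_a-y_a$ greedily over the atoms $I_i$ with $a\notin I_i$, splitting at most one atom per arc into two weighted copies (one of which receives $a$). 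This achieves $\Pr(a\in I)=x_a$ exactly while keeping the support size at most $\alpha+|A|$. (As an aside, the equality is not actually needed downstream: since $p+q-pq$ is nondecreasing in each argument on $[0,1]$, the bound $\Pr(a\in\hat\chi)\le 2x_a-x_a^2$ already follows from $\Pr(a\in I),\Pr(a\in O)\le x_a$; but as written your construction does not prove the lemma as stated.)
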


The algorithm of \citet{laekhanukit2012rounding} which we adapt for our case is as follows: given an $x$ feasible for \ref{WMSCSLP} for an arbitrary root $r$, enumerate the support of $\mcD_\text{in}$ and $\mcD_\text{out}$ as in \Cref{thm:sampling} to get digraphs $I_1, \ldots I_{\alpha}$ and $O_1, \ldots O_{\beta}$. Let $\chi_{ij}$ be $I_i \cup O_j$. Return the $\chi_{ij}$ of minimum weight.\footnote{Here \citet{laekhanukit2012rounding} actually used the optimal $x$. Additionally, \citet{laekhanukit2012rounding} stated their algorithm as choosing uniformly at random from among the $\chi_{ij}$ but then remarked that it can be derandomized; we give the derandomized version.}

\subsection{Rounding LP Solutions}
We now apply this algorithm to LP solutions whose non-zero entries are at least $f$. Let $\pmscs$ be the polytope corresponding to \ref{WMSCSLP} and let $P_{f} := \{x \in \pmscs :  x_a \not \in (0, f) \text{ for all } a \in A\}$ be all feasible points whose non-zero entries are at least $f$.

\begin{restatable}{theorem}{algRand}\label{thm:algRand}
	Given an $x \in P_f$, the above deterministic algorithm outputs an integral \mscs solution $\hat{x}$ such that $w(\hat{x}) \leq (2-f)\cdot w(x)$ in polynomial time.
\end{restatable}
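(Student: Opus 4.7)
The plan is to bound the weight of the deterministic algorithm's output by the expected weight $\E[w(I \cup O)]$ for independent samples $I \sim \mcD_\text{in}$ and $O \sim \mcD_\text{out}$, and then to use the lower bound $f$ on non-zero coordinates of $x$ to show this expectation is at most $(2-f)\, w(x)$.

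First I would check that $x \in P_f \subseteq \pmscs$ is feasible for both the in- and out-arborescence LPs. Feasibility for \ref{LPArbIn} is immediate since $x(\delta^+(S)) \geq 1$ for every nonempty proper $S$, including those with $r \notin S$; the out-arborescence case is symmetric because a cut of the form $\delta^-(S)$ with $r \in S$ coincides with $\delta^+(V \setminus S)$. This licenses an application of \Cref{thm:sampling} to obtain distributions $\mcD_\text{in}, \mcD_\text{out}$ of polynomial-size support in which each arc $a$ appears with probability exactly $x_a$.

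Next, because the algorithm returns the minimum-weight pair $\chi_{ij} = I_i \cup O_j$ across the full (polynomial-size) product support,
\[ w(\hat{x}) \;\leq\; \E_{I \sim \mcD_\text{in},\, O \sim \mcD_\text{out}}\bigl[w(I \cup O)\bigr] \;=\; \sum_{a \in A} w(a)\,\Pr(a \in I \cup O), \]
and by independence and inclusion--exclusion, $\Pr(a \in I \cup O) = 2x_a - x_a^2$. To complete the bound I would argue arc-by-arc that $2x_a - x_a^2 \leq (2-f)\, x_a$: when $x_a = 0$ both sides vanish, and when $x_a \geq f$ the inequality simplifies to $x_a \geq f$, which holds by hypothesis. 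Summing over $a$ yields $\E[w(I \cup O)] \leq (2-f)\, w(x)$. Feasibility of $\hat{x}$ as a strongly connected spanning subgraph follows from the standard observation that the union of any $r$-in-arborescence and any $r$-out-arborescence is strongly connected (every vertex reaches $r$ and $r$ reaches every vertex), and the running time is polynomial because the supports of $\mcD_\text{in}$ and $\mcD_\text{out}$ are polynomial-size and explicitly computable.

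The argument is short and the only real subtlety I anticipate is the per-arc inequality, which works precisely because the concavity term $x_a^2$ absorbs exactly the slack between $2x_a$ and $(2-f)x_a$ when $x_a \geq f$. This is the one place the $P_f$ assumption is used, and it makes the $1.5$ bound for half-integral solutions (the case $f = 1/2$) drop out as an immediate specialization.
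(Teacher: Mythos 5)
Your proposal is correct and follows essentially the same route as the paper: bound the minimum over the product support by the expectation, compute $\Pr(a \in I \cup O) = 2x_a - x_a^2$ via independence and inclusion--exclusion, and use $x_a > 0 \implies x_a \geq f$ to get the per-arc bound $(2-f)x_a$. Your explicit check that $x \in \pmscs$ is feasible for both arborescence LPs is a small detail the paper leaves implicit, but the argument is otherwise identical.
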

\begin{proof}
	Let $\chi$ be the digraph returned by the algorithm. Since each $O_i$ and $I_j$ contains an $r$-out and $r$-in-arborescence, the characteristic vector corresponding to $\chi$ is a feasible integral solution to \ref{WMSCSLP}. Moreover, a polynomial runtime follows immediately from \Cref{thm:sampling} and the fact that there are only polynomially many $\chi_{ij}$'s.
	
	Thus, let us bound the cost of $\chi$. By an averaging argument, it suffices to upper bound the expected cost of $\hat{\chi} = I \cup O$ where $I  \sim \mcD_{\text{in}}$ and $O \sim \mcD_{\text{out}}$ since $\chi$ is in the support of this distribution, and the minimum cost member in the collection has cost at most the average.
	
	Consider a fixed arc $a$. By the inclusion-exclusion principle, the fact that $a \in I$ is independent of whether $a \in O$, and \Cref{thm:sampling}, the probability that $a$ is in $\hat{\chi}$ is
	\[
	\Pr(a \in \hat\chi) =\Pr(a \in I) + \Pr(a \in O) - \Pr(a \in I)\Pr(a \in O) = 2x_a - x_a^2.
	\]
	
	Thus, by linearity of expectation we have
	\begin{align*}
	\E[w(\hat{\chi})] &= \sum_a w(a) \cdot (2x_a - x_a^2)\\
		&\leq \sum_{a} w(a) \cdot (2x_a - x_a \cdot f) & (\text{Since } x_a > 0 \implies x_a \geq f \ \  \forall x_a \in P_f)\\
	& = (2-f) \cdot w(x).
	\end{align*}
\end{proof}	

As a corollary of \Cref{thm:algRand} we recover our $1.5$-cost rounding for half-integral solutions:
\begin{restatable}{corollary}{halfCor}\label{cor:half}
	There is a deterministic polynomial-time algorithm which, given an $x \in P_{1/2}$, outputs a feasible integral $\hat{x}$ such that $w(\hat{x}) \leq 1.5 \cdot w(x)$.
\end{restatable}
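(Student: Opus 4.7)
The plan is to derive this corollary as an immediate specialization of \Cref{thm:algRand}. The polytope $P_{1/2}$ is by definition the set of feasible points for \ref{WMSCSLP} whose nonzero coordinates are at least $1/2$, so it is the concrete instantiation $P_f$ with $f = 1/2$. Hence the same deterministic algorithm of \citet{laekhanukit2012rounding} described above applies verbatim to any $x \in P_{1/2}$.

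Concretely, I would first invoke \Cref{thm:algRand} with $f = 1/2$ to produce, in polynomial time, an integral \mscs solution $\hat{x}$. Then I would substitute $f = 1/2$ into the cost guarantee $w(\hat{x}) \leq (2 - f) \cdot w(x)$ of \Cref{thm:algRand}, obtaining $w(\hat{x}) \leq (2 - 1/2) \cdot w(x) = 1.5 \cdot w(x)$, which is exactly the claimed bound.

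There is no real obstacle here: the corollary requires no additional structural insight beyond noticing that the half-integral case is the $f = 1/2$ slice of the more general theorem, and that feasibility and polynomial runtime transfer immediately. The only things to double-check are that (i) a half-integral $x$ indeed satisfies the hypothesis $x_a \notin (0, 1/2)$ for all $a$, which is trivial since $x_a \in \{0, 1/2, 1\}$, and (ii) the guarantee $\Pr(a \in \hat{\chi}) = 2x_a - x_a^2$ from the proof of \Cref{thm:algRand} is tight enough at $x_a = 1/2$ to yield the $1.5$ factor, which it is: $2 \cdot \tfrac{1}{2} - \tfrac{1}{4} = \tfrac{3}{4} = 1.5 \cdot \tfrac{1}{2}$.
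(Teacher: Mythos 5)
Your proposal is correct and matches the paper's treatment exactly: the paper offers no separate argument for \Cref{cor:half} beyond noting it is the $f = 1/2$ instantiation of \Cref{thm:algRand}, which is precisely the substitution you carry out. Your extra sanity checks (that $P_{1/2}$ is the $f=1/2$ slice of $P_f$ and that $2x_a - x_a^2 \le 1.5\, x_a$ when $x_a \ge 1/2$) are harmless confirmations of what the theorem already guarantees.
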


\section{Acknowledgments}
D Ellis Hershkowitz supported in part by NSF grants CCF-1527110, CCF-1618280, CCF-1814603,CCF-1910588, NSF CAREER award CCF-1750808 and a Sloan Research Fellowship. R Ravi supported in part by the U.S.\ Office of Naval Research award N00014-18-1-2099.

\bibliographystyle{plainnat}
\bibliography{abb,main}

\end{document}